\documentclass[doublecol]{epl2}

\usepackage{amssymb}
\usepackage{latexsym, amsmath, amscd,amsthm}
\usepackage{graphicx}
\usepackage[percent]{overpic}
\usepackage{units}
\PassOptionsToPackage{caption=false}{subfig}
\usepackage[lofdepth]{subfig}
\usepackage{booktabs} 

\usepackage{clrscode}

\def\figdir{figs/}
\graphicspath{\figdir}

\newtheorem{theorem}{Theorem}

\newtheorem{proposition}[theorem]{Proposition}
\newtheorem{corollary}[theorem]{Corollary}

\theoremstyle{definition}
\newtheorem{definition}[theorem]{Definition}

%\numberwithin{equation}{section}

% make a small change

\newcommand{\R}{\mathbb{R}}

\newcommand{\cov}{\operatorname{cov}}

\newcommand{\norm}[1]{\left\| #1 \right\|}

 %interior

\newcommand{\dt}{\,\mathrm{d}t}

\newcommand{\from}{\co\!\!}

\newcommand{\tr}{\operatorname{tr}}
\newcommand{\nul}{\operatorname{null}}

\newcommand{\graphG}{\mathbf{G}}
\newcommand{\edgesE}{\mathbf{e}}
\newcommand{\verticesV}{\mathbf{v}}
\newcommand{\edge}{e}
\newcommand{\vertex}{v}
\newcommand{\head}{\operatorname{head}}
\newcommand{\tail}{\operatorname{tail}}

\newcommand{\im}{\operatorname{col}}
\newcommand{\grad}{\operatorname{grad}}
\renewcommand{\div}{\operatorname{div}}

\newcommand{\distas}{\text{ distributed as }}

\def\co{\colon\thinspace}

\bibliographystyle{eplbib}

\title{Factoring the Laplacian to understand topological polymers}
\author{Jason Cantarella\inst{1} \and Tetsuo Deguchi\inst{2} \and Clayton Shonkwiler\inst{3} \and Erica Uehara\inst{4}}
\shortauthor{J. Cantarella \etal}

\institute{
	\inst{1} University of Georgia - Athens, GA, USA\\
	\inst{2} Ochanomizu University - Tokyo, Japan\\
	\inst{3} Colorado State University - Fort Collins, CO, USA \\
	\inst{4} Kyoto University - Kyoto, Japan
}

% \keywords{Gaussian random polygon, Gaussian random walk, topological polymer, $\theta$-polymer, ring polymer, graph polymer}

\abstract{
A ring polymer is a random walk whose steps obey a single linear condition; their sum vanishes. Factoring the graph Laplacian into the product of the incidence matrix and its transpose allows us to show that for a more complicated network, the steps must lie in a linear subspace determined by the graph topology.  
This provides a useful new perspective on the James--Guth theory of phantom elastic networks. In particular, we formulate phantom networks which are free from the constraints of fixed crosslinks. For a given network the solution of the loop constraints makes the partition function finite-valued in the path integral formulation without applying any external forces or fixing any monomer positions. 
The resulting probability distribution on edge displacements is rotationally invariant, which is practically quite useful for generating unbiased random samples of edge displacements and monomer positions. Furthermore, one can exactly calculate many physical quantities such as correlation functions with respect to this distribution. Finally, this reformulation lends itself well to the case of non-Gaussian distributions. We illustrate this by computing the expected radius of gyration of a ring polymer in a wide variety of models.}

\begin{document}

\maketitle

\section{Introduction}

There has been increasing interest in polymers with topologies more complicated than the standard linear polymer in recent years. Branched, multicyclic, ``tadpole'' or ``lasso,'' and bottlebrush polymers have all been studied. Very recently, the Tezuka lab~\cite{Suzuki:2014fo,Tezuka:2017gh} has started to synthesize polymers with even more complicated topologies such as a $K_{3,3}$ graph. 

Polymers are traditionally modeled by random walks, and for previous topologies, the random walk model was relatively simple. The walk is a sum of steps which are either independent (along a branch) or part of a collection of steps conditioned on the hypothesis that they sum to zero (along an isolated loop). This conditioning introduces a small dependence between steps, but the hypothesis is a single linear constraint which can be handled by elementary methods.

For polymers with multiple loops, the steps are conditioned on a hypothesis which is much more complicated-- the sum of steps around \emph{any} loop in the polymer must vanish. Further, the same edge is likely part of many loops at the same time. Understanding the dependency structure of the edges is considerably more complicated in this case and seemed somewhat daunting. This problem was faced in the classical theory of elasticity~\cite{James:1947hp,Flory:1976ke,Haliloglu:1997iu,Estrada:2010fi}, in which the vertex positions $x_i$ were assumed to have the canonical density proportional to 
\begin{equation}
\sum_k e^{-\frac{1}{2}\sum_{i,j \text{ joined by edge }} \norm{x_i^k - x_j^k}^2} = \sum_k e^{-\frac{1}{2}\langle x^k, L x^k \rangle},
\label{eq:laplacian formulation}
\end{equation}
where $L$ is the graph Laplacian and $x^k$ is the vector of $k$-th coordinates of the vertex positions. Here, vertices and edges correspond to crosslinks and chains, respectively, in polymer networks. Eichinger~\cite{eichingerConfigurationStatisticsGaussian1980} gives an excellent summary of this theory. This formulation solves the loop constraint problem by ignoring it; writing everything in term of vertex positions allows one to avoid any discussion of steps. The main remaining problem is that the graph Laplacian $L$ is not positive-definite, since the function above is invariant under translations. In the James--Guth approach some of the monomer positions are fixed in order to make the partition function finite. However, as Eichinger points out, under these extra constraints the calculation of path integrals is quite difficult. % Furthermore, it is practically nontrivial to generate unbiased network configurations when the positions of selected monomers are fixed.

The graph Laplacian $L$ can be factored as $L=BB^T$ where $B$ is the incidence matrix of the graph. We first rewrite the terms in the density in~\eqref{eq:laplacian formulation} as $e^{-\frac{1}{2}\langle x^k, L x^k \rangle} = e^{-\frac{1}{2}\langle x^k, BB^T x^k \rangle} = e^{-\frac{1}{2}\langle B^T x^k, B^T x^k\rangle}$, suggesting we should shift our attention from $x^k$ to $B^T x^k$. $B^T$ maps from a vector space of vertex position coordinates to a vector space of edge displacement coordinates-- which are called
chain vectors~\cite{Flory:1976ke} in polymer physics-- by taking differences of vertex positions. Since $B^T$ acts like a discrete gradient, this motivates us to recast the problem of loop constraints as something like a vector calculus problem. 

Finding edges which satisfy the constraints is analogous to determining which vector fields on a complicated domain admit a scalar potential: the task is to construct a set of chain vectors that are compatible with all the loop constraints without assuming even the existence of the crosslink positions. This can be done by quite straightforward linear algebra once one chooses to focus on the edge space instead of the vertex space. This new perspective is the primary contribution of our paper. In particular, we observe that selecting vertex positions with center of mass at the origin according to~\eqref{eq:laplacian formulation} is precisely equivalent to selecting edge displacements from a standard Gaussian on the column space of $B^T$. This gives a regularization of the partition function of a given phantom elastic network.
As an application, we give a simple algorithm for sampling Gaussian random embeddings of arbitrary (connected) multigraphs, including a fast algorithm for generating Gaussian ring polymers. This algorithm should be quite useful in physical applications, since it can be used to provide unbiased estimates of the ensemble average of any physical quantity of polymers. 

While the classical theory summarized in~\cite{eichingerConfigurationStatisticsGaussian1980} and extended in~\cite{eichingerRubberElasticitySolution2015} is intimately bound to the Gaussian hypothesis, we have carefully separated linear algebra and probability theory.
Therefore, we can build on this framework, as more physically realistic constraints (self-avoidance, external fields, bending energies, and so forth) are often quite naturally expressed in terms of the edge space. We give an example by computing the expected radius of gyration of the ring polymer in a general model.

% For instance, even the sampling algorithm itself should be quite useful in physical applications. There are two ways of representing configurations of a polymer with multiple loops: specifying either the positions of segments or a set of allowable bond vectors together with the position of a segment. Here we have assumed that a set of bond vectors are allowable if they satisfy all the loop constraints. The sampling algorithm gives the latter approach. Thanks to it we can construct a large number of statistically independent configurations of Gaussian polymers with any complex structure by randomly generating a set of allowable bond vectors. It is a direct and unbiased approach to the ensemble average of any physical quantity in the polymer. Furthermore, in the method we introduce no fixed segments.

In follow-up papers~\cite{tcrw-theory,cantarellaExactFormulaContraction2025}, we use the framework developed here to compute the exact expected radius of gyration and contraction factor of any Gaussian topological polymer modeled on a subdivision of a multigraph $\graphG$ in terms of spectral data associated to $\graphG$. 

The method in the manuscript is quite useful in numerical applications. Formula (5) gives a fast-sampling method for random configurations of ideal topological polymers with loops, as demonstrated in the density functional theory for cyclic block copolymers~\cite{tomiyoshi_density_2024}; it should practically be of the lowest computational cost and much simpler than previous approaches~\cite{des_cloizeaux_topological_1979,Uehara2016}, although excluded volume is not considered. Even the estimates of the mean-square radius of gyration for ideal topological polymers can be useful for explaining the order among estimates for real topological polymers with different architectures such as ring and tadpoles. In simulations the order among estimates of the mean-square radius of gyration for ideal topological polymers with different architectures is often the same as those of real topological polymers with the corresponding architectures but non-zero excluded volume~\cite{Uehara2016}.  

Theoretical results in the manuscript are partially supported by recent experiments~\cite{mato_programmed_2020}. The experimental data of hydrodynamic radius and intrinsic viscosity versus total molecular weight for spiro-multicyclic polymers are plotted in SI of \cite{mato_programmed_2020}. They are at least roughly in agreement with estimates of the mean-square radius of gyration evaluated by subdivision formula~\cite{tcrw-theory}.

\section{Definitions and Preliminaries}

Let $\graphG$ be a connected multigraph with $\edgesE$ edges $\edge_1, \dots, \edge_\edgesE$ and $\verticesV$ vertices $\vertex_1, \dots, \vertex_\verticesV$. We assume that the graph is directed,\footnote{The direction picked for each edge is arbitrary and won't affect the theory. The directions just need to be consistent throughout any particular set of calculations.} so that each edge has a head vertex $\head(\edge_i)$ and a tail vertex $\tail(\edge_i)$. When $\edge_i$ is a loop edge, $\head(\edge_i) = \tail(\edge_i)$, and since we allow multiple edges between vertices, it is no problem if $\head(\edge_i) = \head(\edge_j)$ and $\tail(\edge_i) = \tail(\edge_j)$.

\begin{definition}\label{def:vertex and edge vectors}
	Let $\graphG$ be a connected, directed multigraph. A \emph{vertex vector} $x \in \R^{d\verticesV}$ for $\graphG$ is formed by stacking $\verticesV$ vectors $x_i \in \R^d$, where $x_i$ is the position of vertex $\vertex_i$. We will let $x^k = (x_1^k, \dots , x_\verticesV^k) \in \R^\verticesV$ be the vector of $k$-th coordinates of all vertex positions.
	
	An \emph{edge vector} $w \in \R^{d\edgesE}$ for $\graphG$ is formed by stacking $\edgesE$ vectors $w_j \in \R^d$, where $w_j$ is the displacement along edge~$\edge_j$. We let $w^k = (w_1^k,\dots , w_\edgesE^k) \in \R^\edgesE$ be the vector of all $k$-th coordinates of the edge displacements. See Fig.~\ref{fig:edge and vertex vectors}.	
\end{definition}
\begin{figure}
	\begin{center}
	\includegraphics[width=1in]{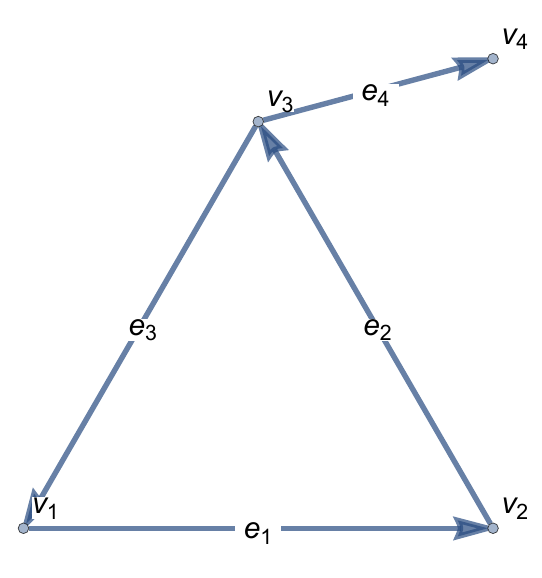} \\
	\begin{tabular}{c}
	  $x = \left( \right.$ \raisebox{-0.35\height}{\includegraphics[width=1.2in]{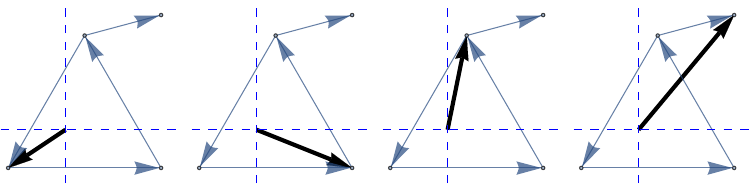}} $\left.\right)$ \\
	  $w = \left( \right.$ \raisebox{-0.35\height}{\includegraphics[width=1.2in]{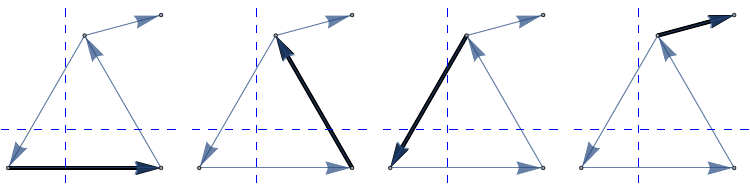}} $\left.\right)$ 
	  \end{tabular}
	  \end{center}
\caption{A particular graph embedding in $\R^2$, along with the components of its vertex vector $x \in (\R^{2})^4$ and edge vector $w \in (\R^{2})^4$.}
\label{fig:edge and vertex vectors} 
\end{figure}

It will also be useful to think of $x^k$ as a (scalar) function $x^k\from \{\vertex_1, \dots, \vertex_\verticesV\} \rightarrow \R$ on the vertices of the graph. We can similarly think of $w^k$ as a function $w^k \from \{ \edge_1, \dots, \edge_\edgesE\} \rightarrow \R$, which we think of as a vector field on the multigraph. Since $\graphG$ is directed, the sign of $w^k(\edge_j)$ uniquely determines a direction of flow along edge $\edge_j$.

We now recall the relationship between scalar functions and vector fields on multigraphs. In analogy to vector calculus, we define two linear maps between the spaces of functions and vector fields. The gradient field $\grad f$ of a function $f \from \{\vertex_1, \dots, \vertex_\verticesV\} \rightarrow \R$ is the vector field defined by 
\begin{equation}
	(\grad f)(\edge_j) = f(\head(\edge_j)) - f(\tail(\edge_j)),
\end{equation}
  and the divergence $\div u$ of a vector field $u\from \{ \edge_1, \dots, \edge_\edgesE\} \rightarrow \R$ is the function
\begin{equation*}
    (\div u)(\vertex_i) = \sum_{j=1}^\edgesE 
    \begin{cases} 
    -u(\edge_j) & \text{if $\vertex_i$ is (only) the head of $\edge_j$,} \\
    +u(\edge_j) & \text{if $\vertex_i$ is (only) the tail of $\edge_j$,} \\
    0 			 & \text{if $\vertex_i$ is head and tail of $\edge_j$,}\\
	0			 & \text{else.}
    \end{cases}
\end{equation*}
As a matrix, $\div = -B$, where $B$ is the $\verticesV \times \edgesE$ incidence matrix with 
\begin{equation}
   B_{ij} = \begin{cases}
   			+1 & \text{if $\vertex_i$ is (only) the head of $\edge_j$,} \\
   			-1 & \text{if $\vertex_i$ is (only) the tail of $\edge_j$,} \\
   			 0 & \text{if $\vertex_i$ is head and tail of $\edge_j$,}\\
   			 0 & \text{else.}
   		 \end{cases}
\end{equation}
Loop edges contribute zero columns, and when there are multiple edges connecting two vertices $B$ will have repeated columns. 

On the other hand, $\grad$ is the $\edgesE \times \verticesV$ matrix $B^T$. We will say that $f$ is a \emph{potential function} for $u$ if $\grad f = u$ and that $u$ is a \emph{conservative} vector field if it has a potential function. If $u$ is conservative, there is a one-dimensional family of possible potential functions $f + C$ where $C$ is a constant function on $\graphG$.

As a subspace of the space of all vector fields, we can characterize the conservative vector fields using a version of the Helmholtz decomposition:
\begin{theorem}
\label{thm:stokes theorem}
The vector space $\R^\edgesE$ of vector fields on $\graphG$ is spanned by a $(\verticesV - 1)$-dimensional subspace of conservative vector fields and an orthogonal $(\edgesE - \verticesV + 1)$-dimensional space of divergence-free vector fields. 
\end{theorem}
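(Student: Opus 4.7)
The plan is to prove the theorem by combining a simple orthogonality computation with a rank-nullity count, the key ingredient being connectivity of $\graphG$.

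First I would verify that the two subspaces are orthogonal. For any function $f$ on vertices and any vector field $u$, the identity
\begin{equation*}
\langle \grad f, u \rangle = \langle B^T f, u \rangle = \langle f, B u \rangle = -\langle f, \div u \rangle
\end{equation*}
holds in $\R^{\edgesE}$ and $\R^{\verticesV}$ respectively. Consequently, if $u$ is divergence-free then $\langle \grad f, u \rangle = 0$ for every $f$, so the space of conservative fields $\im(\grad)$ is orthogonal to the space of divergence-free fields $\ker(\div)$. In particular, these two subspaces intersect only at $0$.

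Next I would compute dimensions. I would show that $\ker(\grad)$ consists exactly of the constant functions. The inclusion is clear; for the reverse, suppose $\grad f = 0$. For any non-loop edge $\edge_j$, $(\grad f)(\edge_j) = f(\head(\edge_j)) - f(\tail(\edge_j)) = 0$ implies $f$ takes the same value at the two endpoints. Since $\graphG$ is connected, one can walk between any two vertices along a sequence of edges (ignoring direction, since $\grad f = 0$ forces equality either way), so $f$ is constant. Thus $\dim \ker(\grad) = 1$, and by rank-nullity applied to $\grad \from \R^{\verticesV} \to \R^{\edgesE}$,
\begin{equation*}
\dim \im(\grad) = \verticesV - 1.
\end{equation*}
Since $\grad = B^T$, the rank of $B$ also equals $\verticesV - 1$, and rank-nullity applied to $\div = -B$ yields $\dim \ker(\div) = \edgesE - (\verticesV - 1) = \edgesE - \verticesV + 1$.

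Finally I would combine the two facts. The orthogonal subspaces $\im(\grad)$ and $\ker(\div)$ have trivial intersection and dimensions summing to $\edgesE$, so they together span $\R^{\edgesE}$ and give an orthogonal direct-sum decomposition of the stated form. The only mild subtlety to be careful about is the treatment of loop edges: a loop contributes a zero column to $B$ (and a zero row to $B^T$), so it imposes no constraint in the connectivity argument and causes no trouble in the rank count. I do not expect any real obstacle here; the theorem is essentially the finite-dimensional Hodge/Helmholtz decomposition for the incidence operator of a connected multigraph.
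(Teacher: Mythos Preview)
Your proof is correct and follows essentially the same route as the paper: both hinge on the adjoint relationship $\grad = B^T$, $\div = -B$ together with connectivity giving $\dim\ker\grad = 1$. The paper is simply terser, invoking the four-subspaces identity $(\ker B)^\perp = \im B^T$ in one stroke where you separately verify orthogonality and then use a dimension count to obtain the spanning, but the content is the same.
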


\begin{proof}
The divergence-free fields are by definition $\nul B$. Their orthogonal complement is the subspace ${(\nul B)^\perp =  \im B^T = \im \grad}$. Since $\graphG$ is connected, $\nul \grad$ is one-dimensional (only the constant functions have no gradient), so $\im \grad$ has dimension $\verticesV - 1$. The dimension of $\nul \div$ follows.
\end{proof}

Here $\nul B$ (also called $\ker B$) is the subspace of edge space $\R^\edgesE$ whose elements are annihilated by $B$, and $\im B^T$ (also called $\operatorname{im} B^T$) is the subspace of $\R^\edgesE$ generated by the column vectors of $B^T$.

Consider the problem of finding a potential function $f$ given a vector field $u \in \im B^T$. Since $\nul \grad$ is not empty, this problem is underdetermined: adding something in $\nul \grad$ to any particular solution $f$ still yields a function with $\grad f = B^T f = u$. We can define a unique canonical potential function $f$ for $u$ by taking the potential function of minimum norm (among all possible potential functions for $u$).

This minimum norm potential function can be computed conveniently using the Moore--Penrose pseudoinverse. The pseudoinverse $A^+$ of a matrix $A$ is ``as close as possible'' to the inverse of a matrix which is not full rank. It may be computed by taking the singular value decomposition $A = U \Sigma V^T$ and defining $A^+ = V \Sigma^+ U^T$, where $\Sigma^+$ is the diagonal matrix whose nonzero entries are the reciprocals of the corresponding \emph{nonzero} entries in~$S$.

Equivalently, the pseudoinverse is defined to be the matrix $A^+$ satisfying the four \emph{Moore--Penrose conditions}:
\begin{align}
& AA^+A = A, \quad A^+AA^+ = A^+, \label{eq:moore penrose conditions} \\ 
& AA^+ = (AA^+)^T, \quad A^+A = (A^+A)^T. \nonumber
\end{align}
It will also be helpful to recall that $AA^+$ is the orthogonal projector onto $\im A = (\nul A^T)^\perp$, that $A^+A$ is the orthogonal projector onto $\im A^T = (\nul A)^\perp$, and that $(A^T)^+ = (A^+)^T$, so there is no ambiguity in writing $A^{T+}$ for the combination. 

\begin{theorem}
The smallest $x$ minimizing $\|Ax - b\|^2$ is given by $x_0 = A^+ b$. Further, $x_0 \in (\nul A)^\perp$, and $Ax_0 = b$ $\iff$ $b \in \im A$.
\end{theorem}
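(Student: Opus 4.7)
The plan is to split the minimization via the orthogonal decomposition $\R^{\edgesE} = \im A \oplus (\im A)^\perp$ (in whatever ambient space $A$ maps into) and then invoke Pythagoras twice, once in the codomain and once in the domain. First I would write $b = b_\parallel + b_\perp$ where $b_\parallel = AA^+ b$ and $b_\perp = (I - AA^+)b$. By the fact recalled in the excerpt, $AA^+$ is the orthogonal projector onto $\im A = (\ker A^T)^\perp$, so $b_\parallel \in \im A$ and $b_\perp \perp \im A$. For any $x$, the vector $Ax - b_\parallel$ lies in $\im A$ and is therefore orthogonal to $b_\perp$, giving
\begin{equation*}
\|Ax - b\|^2 = \|Ax - b_\parallel\|^2 + \|b_\perp\|^2.
\end{equation*}
Hence minimizing $\|Ax - b\|^2$ reduces to solving the consistent system $Ax = b_\parallel = AA^+ b$, with unavoidable residual $\|b_\perp\|^2$.

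Next I would identify $x_0 = A^+ b$ as a solution of this system and characterize all others. The computation $A(A^+b) = (AA^+)b = b_\parallel$ shows $x_0$ works, and any other solution has the form $x_0 + z$ with $z \in \ker A$. To see that $x_0$ has the smallest norm, I would use the companion fact that $A^+ A$ is the orthogonal projector onto $(\ker A)^\perp$: combined with the Moore--Penrose condition $A^+ A A^+ = A^+$, this yields $x_0 = A^+b = (A^+A)(A^+b) \in (\ker A)^\perp$. Therefore $x_0 \perp z$ for every $z \in \ker A$, and a second Pythagorean identity
\begin{equation*}
\|x_0 + z\|^2 = \|x_0\|^2 + \|z\|^2
\end{equation*}
shows the minimum is achieved uniquely at $z = 0$. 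This simultaneously establishes the minimum-norm claim and the assertion $x_0 \in (\ker A)^\perp$.

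Finally, for the equivalence: I would note $Ax_0 = AA^+b = b_\parallel$, so $Ax_0 = b$ is equivalent to $b_\perp = 0$, i.e., $b$ equals its own orthogonal projection onto $\im A$, which happens precisely when $b \in \im A$.

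The whole argument is two Pythagorean splittings wrapped around the two projector facts stated just before the theorem; the only real obstacle is keeping straight which projector acts in the codomain ($AA^+$ onto $\im A$, controlling the residual) versus the domain ($A^+A$ onto $(\ker A)^\perp$, controlling minimum norm). Since both projector identities are already recorded in the excerpt, the proof needs no further technical input beyond invoking them.
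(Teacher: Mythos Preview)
Your argument is correct and complete: the two orthogonal decompositions (in the codomain via $AA^+$, in the domain via $A^+A$) together with the Moore--Penrose identity $A^+AA^+ = A^+$ are exactly what is needed, and you have invoked them cleanly.

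However, there is nothing to compare against: the paper does not actually prove this theorem. It is introduced with the phrase ``It is a standard fact that'' and then immediately applied to the potential-function problem for $B^T$. So you have supplied a full proof where the paper supplies none. Your write-up could stand in as a self-contained justification if one were wanted; the only cosmetic point is that the opening clause ``$\R^{\edgesE} = \im A \oplus (\im A)^\perp$'' names the codomain as $\R^{\edgesE}$, which is specific to the later application---for the general statement you should just say the decomposition takes place in the codomain of $A$.
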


Using the theorem, we see that if ${u \in \im \grad = \im B^T}$, then there is a unique potential function ${f = B^{T+} u}$ in $(\nul B^T)^\perp$. As noted above, $\nul B^T$ is the one-dimensional\footnote{Recall that $\graphG$ is assumed to be connected.} space of constant functions, so ${f \in (\nul B^T)^\perp}$ must have $\sum f(\vertex_i) = 0$. It is helpful to observe that $B^{T+}u$ is defined for any vector field $u$ (whether or not $u$ is in $\im B^T$), though $B^T B^{T+} w = u$ if and only if $u \in \im B^T$.

The discussion above solves the embedding problem, but seems to avoid mentioning the loops in the graph explicitly. To recover the loops, observe that every loop in the graph has a divergence-free field that flows around it and every vector field perpendicular to that field obeys the corresponding loop constraint. The fields flowing around the loops span $\nul \div$ (for more details, see~\cite{Jiang:2011hk}), whose dimension $\edgesE - \verticesV + 1$ is exactly the cycle rank of the graph.

% The divergence operator $B$ is the incidence matrix of the graph. We can describe any path in $\graphG$ by a vector $p \in \R^\edgesE$ where $p_i$ is the total (signed) number of times $\edge_i$ is traversed in the course of the path $p$. Since $\div$ is the incidence matrix $B$, it follows that $p$ is a loop $\iff$ $Bp = 0$, or $p \in \nul \div$. A vector of edge displacements $w$ sums to zero around $p$ if and only if $\left< w, p \right> = 0$.

\section{Gaussian random embeddings}

With the language above, we see that any collection of functions $x^1, \dotsc, x^d \in \R^{\verticesV}$ define an embedding of $\graphG$ into $\R^d$. By contrast, a collection of vector fields $w^1, \dots , w^d \in \R^{\edgesE}$ can always be interpreted as a collection of edge displacement vectors, but these displacements define an embedding of $\graphG$ \emph{only if} $w^k \in \im \grad = \im B^T$ for each $k=1, \dotsc, d$. In this case, the $w$'s define a unique embedding $x^k = B^{T+} w^k$ with $\sum x_i^k = 0$. We call such an embedding \emph{centered} because its center of mass is at the origin.

In analogy to the requirement that the displacements $w$ of a Gaussian random walk are sampled from Gaussians with unit variance on $\R^d$ (that is, they are $\mathcal{N}\left(0,\nicefrac{1}{d}\, I_d\right)$), we make the least restrictive assumption about the distribution of $w^k = \grad x^k = B^T x^k$ that we can: 

\begin{definition}
\label{def:gaussian random embedding}
A \emph{Gaussian random embedding} of $\graphG$ into $\R^d$ with $k$th coordinates of the edge displacements $w^k$ is defined by the assumption that $w^k$ is sampled from $\mathcal{N}\left(0,\nicefrac{1}{d}\, I_\edgesE\right)$ conditioned on the hypothesis that ${w^k \in \im \grad = \im B^T}$.
\end{definition}

We now compute the covariance matrix of $w^k$:
\begin{theorem}
\label{thm:displacement covariances}
If $w$ is the edge vector of a Gaussian random embedding of $\graphG$ into $\R^d$, then the vector $w^k$ of $k$th coordinates is distributed as $\mathcal{N}\left(0,\nicefrac{1}{d}\,B^TB^{T+}\right) = \mathcal{N}\left(0,\nicefrac{1}{d}\,B^+B\right)$. 
\end{theorem}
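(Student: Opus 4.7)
The plan is to exploit the fact that conditioning a standard Gaussian on a linear subspace is the same as orthogonally projecting it, and that the relevant projector is exactly $B^T B^{T+} = B^+ B$.

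First I would recall (or verify directly from the Moore--Penrose conditions in \eqref{eq:moore penrose conditions}) that $P \coloneqq B^T B^{T+}$ is the orthogonal projector onto $\im B^T = \im \grad$: the conditions $(B^{T+})B^T B^{T+} = B^{T+}$ and $B^T B^{T+} = (B^T B^{T+})^T$ together with $B^T B^{T+} B^T = B^T$ show that $P^2 = P$, $P^T = P$, and $\im P = \im B^T$. This is essentially the statement already cited after \eqref{eq:moore penrose conditions}.

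Next I would model the conditional distribution explicitly. Let $z \sim \mathcal{N}(0, I_\edgesE)$ be a standard Gaussian on $\R^\edgesE$. By rotational invariance, $z$ decomposes uniquely as $z = Pz + (I-P)z$, and the two pieces are independent Gaussians on the orthogonal subspaces $\im B^T$ and $(\im B^T)^\perp = \ker B$. Conditioning $z$ on the event that it lies in $\im B^T$ is therefore the same as setting the component in $\ker B$ to zero, i.e., it yields the distribution of $Pz$, supported on $\im B^T$. (This is the standard meaning of ``standard Gaussian conditioned on a subspace''; a limiting-argument justification via restricting to tubular neighborhoods of $\im B^T$ and letting the width go to zero also works.)

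The covariance then falls out: since $Pz$ is linear in $z$,
\begin{equation*}
\Cov(Pz) = P \, \Cov(z) \, P^T = P I P = P^2 = P = B^T B^{T+},
\end{equation*}
using symmetry and idempotence of $P$. Hence $w^k \sim \mathcal{N}(0, B^T B^{T+})$, which is the first claimed form. For the second form, I would use $(B^T)^+ = (B^+)^T$, so that $B^T B^{T+} = B^T (B^+)^T = (B^+ B)^T$, and then invoke the fourth Moore--Penrose condition $B^+ B = (B^+ B)^T$ to conclude $B^T B^{T+} = B^+ B$.

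The only delicate point is being honest about what ``conditioning on $w^k \in \im B^T$'' means, since this is conditioning on a measure-zero event; I would either appeal to the standard Gaussian disintegration along an orthogonal subspace, or simply adopt the projection $Pz$ as the definition (consistent with Definition~\ref{def:gaussian random embedding}). All other steps are routine manipulations with Moore--Penrose identities.
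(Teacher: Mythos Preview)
Your argument is correct and follows essentially the same route as the paper: identify $P = B^T B^{T+}$ as the orthogonal projector onto $\im B^T$, interpret the conditioned Gaussian as $Pz$ for $z \sim \mathcal{N}(0,I_\edgesE)$, and read off the covariance $PIP^T = P = B^T B^{T+} = B^+B$. Your version is a bit more explicit about verifying the projector properties and about the measure-zero conditioning issue, but the substance is identical.
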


That is, the $k$th edge displacement coordinates $w_i^k$ follow the Gaussian distribution with covariance matrix $\langle w_i^k, w_j^k \rangle = (B^+B)_{ij}/d$ for $i, j = 1, 2, \dots , \edgesE$.  

\begin{proof}
To condition on the hypothesis that a multivariate normal is restricted to a linear subspace, we transform the normal by orthogonal projection to that subspace. Using the fact that $B^T B^{T+}$ is the (symmetric) orthogonal projector onto $\im B^T$ we can compute that the covariance matrix of the projected variable is $\nicefrac{1}{d}$ times
\begin{equation*}
B^T B^{T+} I (B^T B^{T+})^T = B^T B^{T+}= (B^+ B)^T = B^+B.
\end{equation*}
This completes the proof.
\end{proof}
 
We can show that the probability distribution function of edge vectors is invariant under rotation in im $B^T$. Moreover, it is straightforward to calculate the correlation between any given pair of edges, even if they are located in globally distant and separate regions of a network.    

For an illustration, let us consider a multi-theta graph $\Theta_{m,n}$, which consists of two branch points with functionality $m$ where each of the $m$ branches connecting them has $n$ edges. We assume that each edge vector has the same orientation from one branch point to another one along the branches. Then, computing the orthogonal projector $B^+B$ allows us to show:

\begin{corollary}\label{cor:theta correlation}
	Let $w_i$ and $w_j$ be the displacement vectors of edges $e_i$ and $e_j$ in a Gaussian embedding of $\Theta_{m,n}$. Then
	\begin{equation*}
		\mathcal{E}(w_i^T w_j) = \begin{cases} 1-\frac{m-1}{mn} & \text{ if } i = j, \\
		-\frac{m-1}{mn} & \text{ if } e_i \text{ and } e_j \text{ on same branch,} \\
		\frac{1}{mn} & \text{ otherwise.} 
		\end{cases}
	\end{equation*}
\end{corollary}

It is easy to see that the result generalizes that of linear and ring polymers for $m=1$ and $m=2$, respectively.  

Given the edge displacements, it is certainly possible to determine the vertex vector of the corresponding centered embedding directly: for the path, the vertex positions are just the partial sums of the displacement vectors. Strictly speaking, this is based on a choice of (the unique) spanning tree for the path graph, but for more complicated multigraph topologies one needs to find a spanning tree before computing the partial sums. It is generally simpler to compute $B^{T+}$ and use the equation $x^k = B^{T+} w^k$. 

\begin{corollary}
If $x$ is the vertex vector of a Gaussian random embedding of $\graphG$ into $\R^d$, the vector $x^k$ of $k$-th coordinates can be constructed by taking $x^k = B^{T+} y^k$ where $y^k$ is distributed as $\mathcal{N}\left(0,\nicefrac{1}{d}\,I_\edgesE\right)$ on $\R^\edgesE$.
\label{cor:vertex positions from y edges}
\end{corollary}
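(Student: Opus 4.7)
The plan is to show that $B^{T+} y^k$ has exactly the same distribution as the $k$th coordinate vertex vector $x^k$ produced by Definition~\ref{def:gaussian random embedding} via the formula $x^k = B^{T+} w^k$. Since both are centered Gaussians obtained as linear images of Gaussians, it suffices to compare covariance matrices.

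First I would start from the definitions. By \thm{displacement covariances}, $w^k \sim \mathcal{N}(0, B^T B^{T+})$, so $x^k = B^{T+} w^k$ is a centered Gaussian with covariance
\begin{equation*}
B^{T+} (B^T B^{T+}) (B^{T+})^T = \bigl( B^{T+} B^T B^{T+} \bigr) B^+.
\end{equation*}
Applying the second Moore--Penrose identity from~(\ref{eq:moore penrose conditions}) with $A = B^T$, namely $B^{T+} B^T B^{T+} = B^{T+}$, this collapses to $B^{T+} B^+$. On the other hand, for $y^k \sim \mathcal{N}(0, I_\edgesE)$, the linear image $B^{T+} y^k$ is a centered Gaussian with covariance
\begin{equation*}
B^{T+} \, I_\edgesE \, (B^{T+})^T = B^{T+} B^+,
\end{equation*}
using the identity $(B^{T+})^T = B^+$ recorded just before \thm{displacement covariances}. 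The two covariances match, so the distributions agree.

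An alternative (and perhaps more conceptually satisfying) way I could present the argument is to observe that $B^T B^{T+}$ is the orthogonal projector onto $\im B^T$, and the proof of \thm{displacement covariances} already shows that applying this projector to $y^k \sim \mathcal{N}(0,I_\edgesE)$ produces a variable distributed as $w^k$. Then using $B^{T+} = B^{T+} B^T B^{T+}$ one writes
\begin{equation*}
B^{T+} y^k = B^{T+} \bigl( B^T B^{T+} y^k \bigr),
\end{equation*}
and the parenthesized quantity has the distribution of $w^k$, so the whole expression has the distribution of $B^{T+} w^k = x^k$.

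There is no real obstacle here: the only subtle point is the reduction $B^{T+} B^T B^{T+} = B^{T+}$, which is just one of the four Moore--Penrose conditions. Everything else is bookkeeping with transposes and the identity $(B^T)^+ = (B^+)^T$.
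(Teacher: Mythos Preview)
Your proof is correct and essentially matches the paper's own argument: the paper takes exactly your ``alternative'' route, writing $w^k = B^T B^{T+} y^k$ (from the projection in the proof of \thm{displacement covariances}) and then collapsing $x^k = B^{T+} B^T B^{T+} y^k$ to $B^{T+} y^k$ via the Moore--Penrose condition. Your first version is just the covariance-matrix restatement of the same identity.
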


\begin{proof}
We can construct $x^k$ by taking ${x^k = B^{T+} w^k}$, where $w^k$ is distributed as $\mathcal{N}\left(0,\nicefrac{1}{d}\,B^T B^{T+}\right)$ on $\R^\edgesE$. In the proof of Theorem~\ref{thm:displacement covariances}, we showed that $w^k = B^T B^{T+} y^k$, where $y^k$ is distributed as $\mathcal{N}\left(0,\nicefrac{1}{d}\,I_\edgesE\right)$. Thus, using~\eqref{eq:moore penrose conditions}, 
\begin{align*}
	x^k & = B^{T+} B^T B^{T+} y^k = B^{T+} y^k.\qedhere
\end{align*}
\end{proof}

\begin{corollary}\label{cor:vertex distance}
If $\vertex_i$ and $\vertex_j$ are vertices of $\graphG$ connected by a path $p = \edge_1 + \dots + \edge_{n}$ of edges and $\ell_1, \dots, \ell_{\chi(\graphG)}$ is an orthonormal basis for the loop space $\nul B$, then the expectation $\mathcal{E}(\norm{x_i - x_j}^2) = n - \sum_{i=1}^{\chi(\graphG)} \left(\ell_i^T p\right)^2$.
\end{corollary}

\begin{proof}
Thinking of $\vertex_1,\dots , \vertex_\verticesV$ as standard basis vectors for $\R^\verticesV$, for each coordinate $k$, we can write $x^k_i - x^k_j$ as $\left(x^k\right)^T (v_i - v_j)$. But then, since $x^k =B^{T+} y^k$, we have $$(x^k_i - x^k_j)^2 = \left(\left(y^k\right)^T B^+Bp \right)^2.$$ Since $y^k$ is distributed as $\mathcal{N}\left(0,\nicefrac{1}{d}\,I_{\edgesE}\right)$, the expectation is equal to $\frac{1}{d}\norm{B^+ B p}^2$. $B^+ B$ is the orthogonal projector onto $(\nul B)^\perp$, so $B^+B p = p - \sum \left(\ell_i^T p \right) \ell_i$. The statement follows from orthonormality of the $\ell_i$ and ${\norm{p}^2 = n}$. 
\end{proof}

Returning to the example of $\Theta_{m,n}$, let $a_i$ be the $i$th arc connecting junction vertices. $\Theta_{m,n}$ has cycle rank $m-1$, and an orthonormal basis $\ell_1, \dots , \ell_{m-1}$ for $\nul B$ is
\begin{equation*}
	\ell_i = \frac{1}{\sqrt{(m-i)(m-i-1)}}\left((m-i)a_i - a_{i+1} - \dots - a_m\right).
\end{equation*}
With this basis, $p=a_1$ has $\ell_i^Tp = 0$ for $i>1$, and Corollary~\ref{cor:vertex distance} implies that the expected squared distance between junction vertices is $\frac{n}{m}$, as expected~\cite{Uehara:2018bb,zhu_radius_2016}.

We now compute the covariance matrix of~$x^k$. % It will help to recall that $BB^T = \div \grad$ is also known as the (multi)\emph{graph Laplacian} $L$. It's a standard fact that $L = D - A$ where $D$ is the diagonal matrix with entries equal to the degrees of the vertices of the graph and $A$ is the adjacency matrix with $A_{ij}$ recording the number of edges connecting $\vertex_i$ and $\vertex_j$. For example, if $i=j$ then $A_{ii}$ is twice the number of loops based at $\vertex_i$; if $i \neq j$ then $A_{ij}$ is the total number of edges $\edge_k$ with $\{\vertex_i,\vertex_j\} = \{\head(\edge_k),\tail(\edge_k)\}$.

\begin{theorem}
\label{thm:position covariances}
If $x$ is the vertex vector of a Gaussian random embedding of $\graphG$ into $\R^d$, then the vector $x^k$ of $k$th coordinates is distributed as $\mathcal{N}\left(0,\nicefrac{1}{d}\,L^+\right)$ where $L$ is the graph Laplacian of $\graphG$. Hence $x^k$ may be constructed by
\begin{equation}
x^k = (L^+)^{1/2} y^k
\label{eq:square root construction}
\end{equation}
where $(L^+)^{1/2}$ is any symmetric square root of $L^+$ and $y^k$ is distributed as $\mathcal{N}\left(0,\nicefrac{1}{d}\,I_\verticesV\right)$ on $\R^\verticesV$.
\end{theorem}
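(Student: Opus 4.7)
The plan is to combine the preceding Corollary with a pseudoinverse identity to get the covariance $L^+$, and then appeal to the standard transformation rule for centered Gaussians to get the square-root construction.

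First I would invoke Corollary~\ref{cor:vertex positions from y edges} to write $x^k = B^{T+} y^k$ with $y^k \sim \mathcal{N}(0, I_\edgesE)$. Since an affine image $Ay$ of a standard normal has covariance $AA^T$, we immediately get
\begin{equation*}
  \Cov(x^k) = B^{T+} I_\edgesE (B^{T+})^T = B^{T+} B^+,
\end{equation*}
using $(B^{T+})^T = ((B^+)^T)^T = B^+$, which is justified in the excerpt by $(A^T)^+ = (A^+)^T$.

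The main step is to identify $B^{T+}B^+$ with $L^+ = (BB^T)^+$. I would verify the four Moore--Penrose conditions \eqref{eq:moore penrose conditions} for $X := B^{T+}B^+$ as the pseudoinverse of $A := BB^T$. The only nontrivial manipulations are of the form $B^T B^{T+} = P_{\im B^T}$ and $B^+ B = P_{\im B^T}$ (the symmetric orthogonal projector onto $\im B^T$, as recalled in the paper); together with $B \cdot P_{\im B^T} = B$ and $P_{\im B^T} \cdot B^+ = B^+$, these collapse $A X A$ to $A$ and $X A X$ to $X$. The two symmetry conditions are immediate because $B^{T+}B^+$ and $B B^T$ are already symmetric, since $(B^{T+}B^+)^T = (B^+)^T (B^{T+})^T = B^{T+}B^+$. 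A cleaner alternative is to plug in the SVD $B = U\Sigma V^T$ and observe that both $B^{T+}B^+$ and $(BB^T)^+$ reduce to $U(\Sigma\Sigma^T)^+ U^T$; either route is short. This identity is the one place where one must be careful, since in general $(MN)^+ \ne N^+ M^+$; it holds here only because $M = B^T$ and $N = B^{T+}$ have compatible singular-value structure.

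Having shown $\Cov(x^k) = L^+$, the distribution $x^k \sim \mathcal{N}(0, L^+)$ follows because $x^k$ is a linear image of a centered Gaussian and so is itself centered Gaussian. For the construction \eqref{eq:square root construction}, I would simply note that if $(L^+)^{1/2}$ is any symmetric square root of the (symmetric, positive semidefinite) matrix $L^+$ and $y^k \sim \mathcal{N}(0, I_\verticesV)$, then $(L^+)^{1/2} y^k$ has mean $0$ and covariance $(L^+)^{1/2}\bigl((L^+)^{1/2}\bigr)^T = L^+$, giving the claimed distribution. The only real obstacle is the pseudoinverse identity above; everything else is bookkeeping about Gaussian transformations.
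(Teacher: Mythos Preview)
Your approach is correct and essentially the same as the paper's: both start from Corollary~\ref{cor:vertex positions from y edges}, compute the covariance as $B^{T+}(B^{T+})^T = B^{T+}B^+ = (BB^T)^+ = L^+$, and then invoke the standard Gaussian transformation rule for the square-root construction. The only difference is that the paper simply cites the identity $(AA^T)^+ = A^{T+}A^+$ as a general fact about pseudoinverses, whereas you propose to verify it directly via the Moore--Penrose conditions or SVD; your parenthetical labeling ``$M = B^T$ and $N = B^{T+}$'' is garbled (the relevant factorization is $M=B$, $N=B^T$), but the verification you actually outline is fine.
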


That is, random sampling of the $k$th coordinate of vertex vector $x$ (denoted by $x^k$) is obtained by multiplying the square root of $L^+$ by a vertex vector $y^k$ such that each of the $\verticesV$ components is given by the normal distribution of variance $1/d$.

\begin{proof}
For the first part, we know that $x^k = B^{T+} y^k_\edgesE$, where $y^k_\edgesE$ is distributed as $\mathcal{N}\left(0,\nicefrac{1}{d}\,I_\edgesE\right)$ on $\R^\edgesE$. Therefore $x^k$ is a multivariate normal whose covariance matrix is 
\begin{equation*}
B^{T+} \left(\frac{1}{d}I_\edgesE\right) (B^{T+})^T = \frac{1}{d}B^{T+} B^+ = \frac{1}{d}(BB^T)^+ = \frac{1}{d}L^+,
\end{equation*}
using the Moore--Penrose conditions and the fact that $(AA^T)^+ = A^{T+}A^+$ for any $A$. The construction~\eqref{eq:square root construction} is justified by the fact that $(L^+)^{1/2} y^k$ is also a multivariate normal with covariance matrix $\nicefrac{1}{d}\,L^+$. Note that $L$ is a real symmetric matrix, so it has a singular value decomposition in the form $U \Sigma U^T$. This means we can let $(L^+)^{1/2} = U^T (\Sigma^{+})^{1/2} U$. 
\end{proof}

We can sample random vectors $x^k$ of vertex coordinates in two ways: using Corollary~\ref{cor:vertex positions from y edges} or using Theorem~\ref{thm:position covariances}. The latter is almost always preferable, since any multigraph with cycles has at least as many edges as vertices and the covariance matrix $L^+$ only has to be computed once. This, then, gives a powerful computational tool for estimating arbitrary quantities using Monte Carlo integration.

We now do an example. For ring polymers, the graph $\graphG$ is a cycle graph and the Laplacian is:
\begin{equation*}
 L =
  \begin{pmatrix}
       2 &     -1 & \cdots & \cdots &  -1 \\
      -1 &      2 & -1     & \cdots &  0 \\
       0 &     -1 &  2     & -1     &  0 \\
  \vdots & \vdots &  \ddots& \ddots & -1 \\
      -1 & \cdots &  \cdots& -1     &  2
  \end{pmatrix}
 \end{equation*}
The circulant matrix $L$ can be diagonalized as $L = U \Sigma U^T$ where $\sigma_{j} = 4 \sin^2 \frac{\pi j}{\verticesV}$. Since $\sigma_j = \sigma_{\verticesV -j}$, most eigenspaces of $L$ are two-dimensional and the matrix $U$ is not uniquely determined. Eichinger~\cite{Eichinger:1972iy} uses the real matrix
\begin{equation}
U_{ij} = \frac{1}{\sqrt{\verticesV}} \left( \cos \frac{2\pi i j}{\verticesV} + \sin \frac{2 \pi i j}{\verticesV} \right),
\label{eq:svd for cycle graph laplacian}
\end{equation}
which is the discrete Hartley transform; the complex discrete Fourier transform $\widetilde{U}_{ij} = \frac{1}{\sqrt{\verticesV}} e^{-\frac{2\pi \sqrt{-1}}{\verticesV} (i-1)(j-1)}$ also yields $L = \widetilde{U} \Sigma \widetilde{U}^*$.

This allows us to construct a square root $(L^+)^{1/2}$ whose singular values are $\frac{1}{2} \csc \frac{\pi j}{\verticesV}$ for $1 \leq j < \verticesV$ together with a single $0$ corresponding to $\sigma_\verticesV = 0$. We note a similar description for Gaussian ring polymers was also considered by Bloomfield and Zimm~\cite{Bloomfield:1966cy}.

\section{Fourier-type analysis of random graph embeddings}

In the proof of Theorem~\ref{thm:position covariances}, we saw that the vertex vectors of a Gaussian random embedding could be generated by taking
\begin{equation*}
x^k = U^T (\Sigma^+)^{1/2} U y^k, \quad y^k \distas \mathcal{N}\left(0,\nicefrac{1}{d}\,I_\verticesV\right),
\end{equation*}
where $U$ is the matrix of eigenvectors of $L$. Since $U$ is orthogonal, $U y^k$ is also distributed as $\mathcal{N}\left(0,\nicefrac{1}{d}\,I_{\verticesV}\right)$, and we may generate samples of $x^k$ efficiently by multiplying Gaussian random variates by $U^T (\Sigma^+)^{1/2}$. In the case of ring polymers, this can be done in $O(\verticesV \log \verticesV)$ time using the fast Hartley~\cite{Bracewell1984} or fast Fourier transform.

A more geometric way to look at this equation is to see that $x^k$ is a weighted linear combination of the eigenvectors with random (normal) coefficients where the weights are given by the singular values on the diagonal of $(\Sigma^+)^{1/2}$:
\begin{equation*}
x^k = \sum_{\substack{j \in \{ 1,\dotsc,\verticesV \} \\ \sigma_j \neq 0}} \frac{y_j}{\sqrt{\sigma_j}} u_j
\end{equation*}
where each $y_j$ is distributed as $\mathcal{N}\left(0,\nicefrac{1}{d}\right)$. It is clear that the eigenvectors of $L$ with small eigenvalues $\sigma_j$ are expected to play a much larger role in determining the vertex vector $x^k$ than those with larger eigenvalues. 

We can make this observation precise by recalling a few facts from linear algebra.
An optimal rank-$p$ approximation $A_p$ to a matrix $A$ with singular value decomposition $U \Sigma V^T$ is given by replacing $\Sigma$ with another diagonal matrix $\Sigma_p$ keeping a collection\footnote{This collection is not always unique if the singular values of $A$ are not all distinct; in this case, all matrices $A_p$ constructed in this way are equally good rank-$p$ approximations to $A$.} $S_p$ of the $p$ largest singular values of $\Sigma$ and setting the remaining singular values to zero. We can now define the rank-$p$ approximation to a Gaussian random graph embedding
\begin{equation*}
x^k_p = (L^+)^{1/2}_p y^k
\end{equation*}
and note that 
\begin{equation*}
x^k - x^k_p = ((L^+)^{1/2} - (L^+)^{1/2}_p) y^k
\end{equation*}
is a Gaussian random vector. If $\sigma_j^+$ are the singular values of $L^+$, and $S^+_p$ is a collection of the $p$ largest $\sigma_j^+$, then the expected squared norm of this difference vector is 
\begin{equation}
\mathcal{E}\left(\norm{x^k - x^k_p}^2\right) = \frac{1}{d}\tr\left(L^+ - L^+_p\right) =\frac{1}{d} \sum_{\sigma_j^+ \not\in S_p} \sigma_j^+.
\label{eq:p error estimate}
\end{equation}
For the cycle graph, we know from~\eqref{eq:svd for cycle graph laplacian} that $\sigma_j^+=\frac{1}{4} \csc^2 \frac{\pi j}{\verticesV}$ for $j \in \{1, \dots, \verticesV-1\}$ and $\sigma_\verticesV^+=0$. Summing the $\sigma_j^+$ using the formula $\sum_{j=1}^{v-1} \csc^2 (\pi j)=(1/3)(v^2-1)$~\cite[4.4.6.5]{Prudnikov:1986vp} tells us that $\mathcal{E}(\norm{x^k}^2) = \frac{1}{12d}(\verticesV^2 - 1)$ and (for $p$ even) 
\begin{multline}\label{eq:error estimate}
	\mathcal{E}\left(\norm{x^k - x^k_p}^2\right) =\frac{1}{d} \sum_{j=\nicefrac{p}{2}+1}^{\verticesV-\nicefrac{p}{2}-1} \sigma_j^+ \\
	\approx \frac{1}{d}\!\!\!\! \int\limits_{\frac{p+1}{2}}^{n-\frac{p+1}{2}} \frac{1}{4} \csc^2 \frac{\pi t}{\verticesV} \dt = \frac{\verticesV}{2 \pi d} \cot \left( \frac{\pi}{2} \frac{p+1}{\verticesV} \right).
\end{multline} 
Fig.~\ref{fig:ring example} shows examples of low rank approximations to the ring polymer with $\verticesV = 1000$ for $p \in \{20,50,250,999\}$.

\begin{figure}
\hphantom{.}
	\raisebox{-0\height}{\begin{minipage}{1.6in} 
	\includegraphics[height=1.25in]{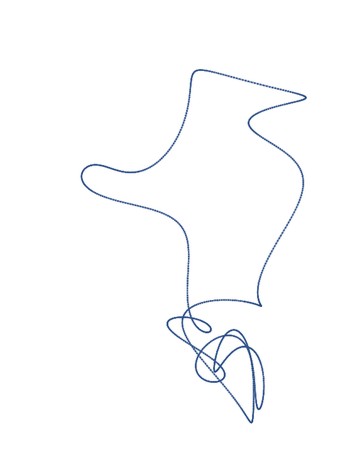} \\ 
	\vspace{-0.1in}
	\begin{center} $x^k_{20} = (L^+)^{1/2}_{20} y^k$ \end{center} \end{minipage}}
\hfill 
	\raisebox{-0\height}
	{\begin{minipage}{1.6in} 
	\includegraphics[height=1.25in]{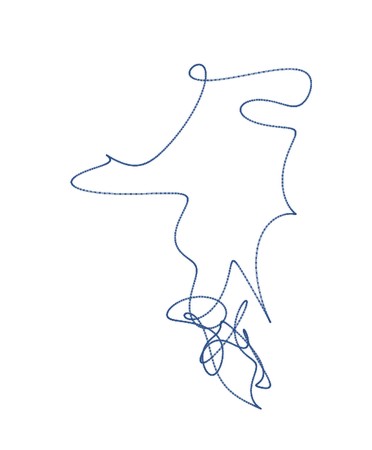} \\ 
	\vspace{-0.1in}
	\begin{center} $x^k_{50} = (L^+)^{1/2}_{50} y^k$ \end{center} \end{minipage}}
\\ 
	\raisebox{-0\height}{\begin{minipage}{1.6in} 
	\includegraphics[height=1.25in]{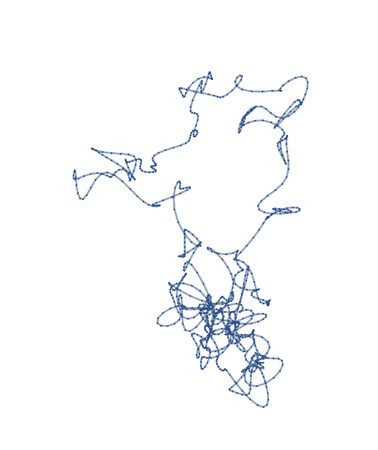} \\ 
	\vspace{-0.1in}
	\begin{center} $x^k_{250} = (L^+)^{1/2}_{250} y^k$ \end{center} \end{minipage}}
\hfill
	\raisebox{-0\height}{\begin{minipage}{1.6in} 
	\includegraphics[height=1.25in]{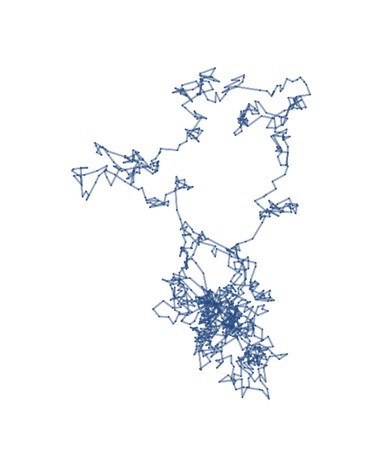} \\ 
	\vspace{-0.1in}
	\begin{center} $x^k = (L^+)^{1/2} y^k$ \end{center} \end{minipage}}
\hphantom{.}
	
\caption{In these pictures, we construct a Gaussian ring polymer $x \in \R^3$ with $1000$ edges by sampling $y^1, y^2, y^3 \in \R^{1000}$ from $\mathcal{N}\left(0,\frac{1}{3}I_{1000}\right)$ and computing $x^k_p = (L^+)^{1/2}_p y^k$ for various low-rank approximations $(L^+)^{1/2}_p$ of $(L^+)^{1/2}$. We can see that the low-rank approximations model the polymer rather well.
%We can measure the relative difference between shapes by computing $\norm{x - x_p}^2/\norm{x}^2$. This can be compared to the estimate $\mathcal{E}(\norm{x - x_p}^2)/\mathcal{E}(\norm{x}^2)$ computed with~\eqref{eq:p error estimate}. For these $p$, we get $0.056$, $0.024$, and $0.004$ for the actual differences, which are rather close to the estimates $0.058$, $0.024$, and $0.005$.  
%The cotangent estimate~\eqref{eq:error estimate} of the expected error agrees with the actual expectation t.
}
\label{fig:ring example} 
\end{figure}
 
We see a similar phenomenon for a $\theta$-curve where each edge has been subdivided $500$ times. Here, we can compute the spectrum of the Laplacian numerically and calculate expected error estimates using~\eqref{eq:p error estimate}, but we do not have explicit formulae for these estimates as we did for the cycle. The results are shown in Fig.~\ref{fig:theta example}.

\begin{figure}
\hphantom{.}
	\raisebox{-0\height}{\begin{minipage}{1.63in} 
	\includegraphics[height=1.25in]{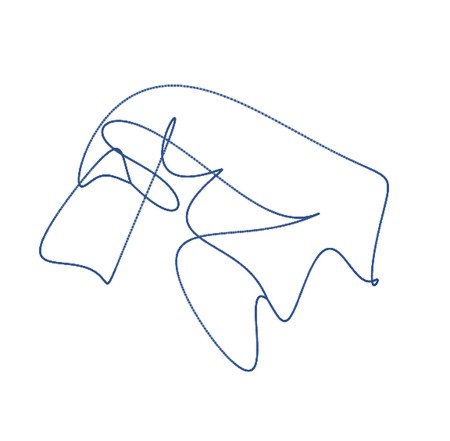} \\ 
	\vspace{-0.1in}
	\begin{center} $x^k_{30} = (L^+)^{1/2}_{30} y^k$ \end{center} \end{minipage}}
		\hfill
	\raisebox{-0\height}
	{\begin{minipage}{1.63in} 
	\includegraphics[height=1.25in]{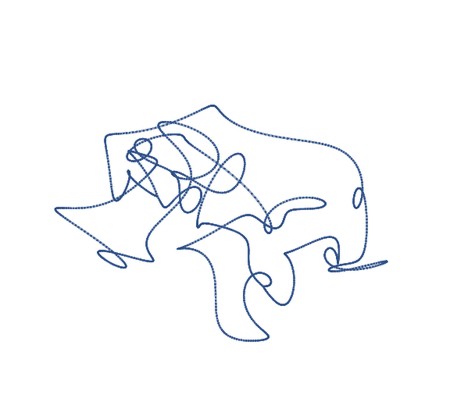} \\ 
	\vspace{-0.1in}
	\begin{center} $x^k_{75} = (L^+)^{1/2}_{75} y^k$ \end{center} \end{minipage}}
\\ 
	\raisebox{-0\height}{\begin{minipage}{1.63in} 
	\includegraphics[height=1.25in]{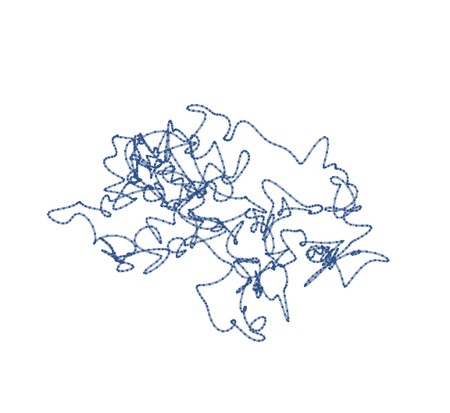} \\ 
	\vspace{-0.1in}
	\begin{center} $x^k_{375} = (L^+)^{1/2}_{375} y^k$ \end{center} \end{minipage}}
		\hfill
	\raisebox{-0\height}{\begin{minipage}{1.63in} 
	\includegraphics[height=1.25in]{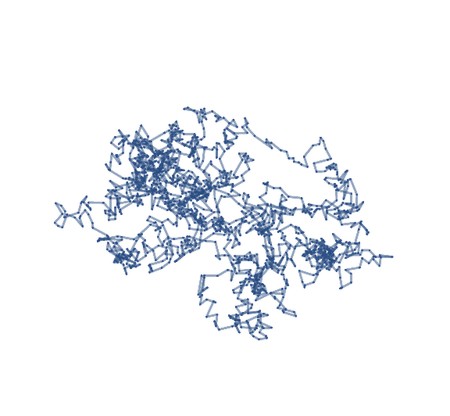} \\ 
	\vspace{-0.1in}
	\begin{center} $x^k = (L^+)^{1/2} y^k$ \end{center} \end{minipage}}
\hphantom{.}
	
\caption{In these pictures, we construct a Gaussian random embedding $x \in \R^3$ of a 1499 vertex $\theta$-curve where each edge has been subdivided into 500 pieces by sampling $y^1, y^2, y^3 \in \R^{1499}$ from $\mathcal{N}\left(0,\frac{1}{3}I_{1499}\right)$ and computing $x^k_p = (L^+)^{1/2}_p y^k$ for various low-rank approximations $(L^+)^{1/2}_p$ of $(L^+)^{1/2}$. Again, the low-rank approximations model the polymer rather well. 
%We compute $\norm{x - x_p}^2/\norm{x}^2$ and compare it to the estimate $\mathcal{E}(\norm{x - x_p}^2)/\mathcal{E}(\norm{x}^2)$ produced by~\eqref{eq:p error estimate}. For these $p$, we get $0.0588$, $0.0237$, and $0.0047$ for the actual differences, which are rather close to the estimates $0.0578$, $0.0238$, and $0.0046$ obtained by numerically computing the singular values of $L^+$. We do not have a closed form for these difference estimates at present, but the example indicates that a similar phenomenon to the cycle is likely present-- relatively few eigenfunctions describe most of the geometry of the embedding.
}
\label{fig:theta example} 
\end{figure}

\section{Non-Gaussian Random Graph Embeddings}

This conceptual framework extends to handle much more general distributions, though doing so requires us to stop treating different coordinates as independent: while a multivariate Gaussian is simply a product of scalar Gaussian distributions on the coordinates, this is not true of general multivariate distributions.

It follows from the properties of the Kronecker product $\otimes$ that $x = (B^{T+} \otimes I_d) w$ because each $x^k = B^{T+} w^k$. Hence, the preceding discussion is a coordinate-wise analysis of the conditional distribution of $\mathcal{N}(0,\nicefrac{1}{d}\, I_{d\edgesE})$ given that $w \in \im(B^T \otimes I_d)$. Stated like this, it is clear how to generalize: given any distribution on $\R^{d \edgesE}$, we get a distribution $\mu$ on embeddings of $\graphG$ in $\R^d$ by conditioning on the hypothesis that the edge vector $w$ is in $\im (B^T \otimes I_d)$. Conditioning on this linear constraint is often relatively simple. We may then construct the corresponding probability distribution on vertex vectors $x \in \R^{d\verticesV}$ by pushing $\mu$ forward by the linear transformation $B^{T+} \otimes I_d$.

 % We now consider a more general case. While a multivariate Gaussian is simply a product of scalar Gaussian distributions on the coordinates, a more general distribution may not have this property.

\begin{proposition}\label{prop:invariant convariance}
If $\mu$ is invariant under the action of $O(d)$ on each edge displacement vector $w_i \in \R^d$, then the $d \edgesE \times d \edgesE$ covariance matrix of $w$ is given by 
$$
\cov(w) = \cov(w^k) \otimes I_d,
$$
where $\cov(w^k)$ is the $\edgesE \times \edgesE$ covariance matrix of $w^k$; that is, $\cov(w^k)_{ij} = \langle w_i^k w_j^k\rangle$. In particular, all $\cov(w^k)$ are equal.
\end{proposition}

\begin{proof}
If $A \in O(d)$, $A$ acts on $w$ by $w \mapsto (I_{\edgesE} \otimes A)(w)$. Since the distribution of $w$ is invariant under this linear map, the covariance matrix $\cov(w)$ is invariant under conjugation by this map:
\begin{equation*}
\begin{aligned}
\cov(w) &= \cov((I_{\edgesE} \otimes A)w) = (I_{\edgesE} \otimes A) \cov(w) (I_{\edgesE} \otimes A)^T 
\end{aligned}
\end{equation*} 
We may write the $d \edgesE \times d \edgesE$ matrix $\cov(w)$ as a $\edgesE \times \edgesE$ block matrix of $d \times d$ matrices. Conjugation by $I_{\edgesE} \otimes A$ conjugates each $d \times d$ submatrix by $A$. Since each of these matrices is fixed by this conjugation, it must be a scalar multiple of $I_d$. In particular, selecting the $k$-th diagonal entry from each block yields $\cov(w^k)$ (and hence these are all equal).
\end{proof}

\begin{proposition} 
With $\mu$ as in Proposition~\ref{prop:invariant convariance}, the expected squared radius of gyration is given by 
$$
\begin{aligned}
\langle R^2_g \rangle &= \frac{1}{\verticesV} \tr \cov(x) \\
&= \frac{d}{\verticesV} \tr \cov(x^k) = \frac{d}{\verticesV} \tr(B^{T+} \cov(w^k) B^{+}).
\end{aligned}
$$
\end{proposition}

\begin{proof}
Since the vertex vector $x$ has center of mass at the origin, ${\langle R^2_g \rangle = \frac{1}{\verticesV} \mathcal{E}(\norm{x_i}^2) = \frac{1}{\verticesV} \sum_{i,k} \mathcal{E}((x^k_i)^2)}$, proving the first equality.
In turn, $x = (B^{T+} \otimes I_d) w$, so Proposition~\ref{prop:invariant convariance} proves the second two equalities. 
\end{proof}

These propositions are true for virtually any model of interest in polymer science. For instance, self-avoiding polymers must have covariance matrices and expected radii of gyration in this form. This allows us to prove something quite general about models of the ring polymer where the edges are drawn from the same distribution (subject to the $\im(B^T \otimes I_d)$ constraint). 

\begin{theorem}\label{thm:R_g ring}
Suppose that $\graphG$ is a cycle graph, and $\mu$ is a probability distribution on $\im(B^T \otimes I_d)$ which is invariant under the action of $O(d)$ and the action of the permutation group on edges in the graph. The variance $\lambda^2$ of each edgelength is the same, and
$$
\langle R^2_g \rangle = \frac{\lambda^2}{12} (\edgesE + 1).
$$
\end{theorem}

\begin{proof}
Since $\mu$ is invariant under the action of the permutation group, the covariance matrix $\cov(w) = \cov(w^k) \otimes I_d$ is unchanged when conjugated by any $P \otimes I_d$ where $P$ is a permutation matrix. This action conjugates $\cov(w^k)$ by $P$. Since $\cov(w^k)$ is fixed by this for any $P$, it follows that all off-diagonal entries are equal, as are all diagonal entries. Thus $\cov(w^k) = \alpha I_{\edgesE} + \beta 1_{\edgesE \times \edgesE}$ for some $\alpha$ and $\beta$.

Now $\nul B$ is spanned by the vector $1_{\edgesE}$ of $1$'s. Since $w^k \in (\nul B)^\perp$, $0 =\mathcal{E}\left(\left( 1_{\edgesE}^T w^k \right)^2\right) = 1_{\edgesE}^T \cov(w^k) 1_{\edgesE} $. Thus $1_{\edgesE} \in \nul \cov(w^k)$. In particular, all of the row and column sums of $\cov(w^k)$ are zero. Coupled with the above, we see
$$
\cov(w^k) = \frac{\lambda^2}{d} \frac{\edgesE}{\edgesE - 1} I_\edgesE - \frac{\lambda^2}{d} \frac{1}{\edgesE-1} 1_{\edgesE \times \edgesE}.
$$
$B^T B$ is a symmetric circulant matrix; its first row is $2, -1, \dots, 0, -1$. The row sums of $B^T B$ vanish, so the column sums of $(B^T B)^+$ also vanish and $1_{\edgesE \times \edgesE} (B^T B)^+ = 0$. 

Now $\tr (B^T B)^+ = \frac{1}{12} (\edgesE^2-1)$ (again using~\cite[4.4.6.5]{Prudnikov:1986vp}). Finally, we compute
\begin{align*}
\langle R^2_g \rangle &= \frac{d}{\edgesE} \tr \left(\cov(w^k) (B^T B)^+ \right) \\
 	      &= \frac{d}{\edgesE} \tr \left(\frac{\lambda^2}{d} \frac{\edgesE}{\edgesE-1} (B^T B)^+ \right) = \frac{\lambda^2 (\edgesE + 1)}{12}. \qedhere
\end{align*}

\end{proof} 

The Gaussian case obeys all the hypotheses. Using Corollary~\ref{cor:vertex distance}, the loop space $\nul B$ is $1$-dimensional and generated by $\ell_1 = \frac{1}{\sqrt{e}} 1_{\edgesE}$. The path $p$ between adjacent vertices has only one edge, so the expected norm $\lambda^2 =  1 - \frac{1}{\edgesE}$, matching the result from Corollary~\ref{cor:theta correlation}. Plugging this into Theorem~\ref{thm:R_g ring} recovers the classical result of \v{S}olc~\cite{Solc1973}: $\langle R^2_g \rangle = \frac{\edgesE^2-1}{12\edgesE}$ for the ring polymer. On the other hand, for an equilateral ring polymer of edgelength $1$, $\lambda^2 = 1$ and we get $\langle R^2_g \rangle = \frac{\edgesE+1}{12}$ as in~\cite{Zirbel2012}. Thus, we have proved a rigorous result for non-Gaussian models.

\section{Conclusion}

We have now given an explicit description of the distribution of vertex positions and displacements in a Gaussian model of topological polymers. Our description is computationally effective-- one can use it to quickly and accurately sample ensembles of polymer shapes. Further, it provides insight into the shapes of the polymers by expressing them as random linear combinations of weighted eigenvectors of the graph Laplacian of the underlying multigraph~$\graphG$. 

As mentioned in the introduction, we use this setup in~\cite{tcrw-theory,cantarellaExactFormulaContraction2025} to compute contraction factors of Gaussian topological polymers based on subdivided graphs. More generally, our approach is well-adapted to handle non-Gaussian distributions on topological polymers, and we expect it to be useful for both theoretical and computational investigation of models including self-avoidance and other more physically realistic constraints.
%We are also working on extending the theory to include self-avoidance constraints, and have computational ``proof-of-concept'' experiments generating configurations of hard spheres realizing various graph types. Finally, inhomogeneous and random networks should also be amenable to theoretical investigation using these methods. 

\acknowledgments
We are deeply grateful to the anonymous referee, whose comments inspired us to make substantial improvements to the paper. Thanks to Yasuyuki~Tezuka and Satoshi Honda for helpful discussions of topological polymer chemistry and to Fan Chung for introducing us to spectral graph theory. This paper stemmed from a long series of discussions which started at conferences at Ochanomizu University and the Tokyo Institute of Technology. Cantarella and Shonkwiler are grateful to the organizers and the Japan Science and Technology Agency for making these possible. In addition, we are grateful for the support of the Simons Foundation (\#524120 to Cantarella, \#354225 to Shonkwiler), the Japan Science and Technology Agency (CREST Grant Number JPMJCR19T4) and the Japan Society for the Promotion of Science (KAKENHI Grant Number JP17H06463).

\end{document}